\documentclass[11pt]{article}

\usepackage[margin=1in]{geometry}
\usepackage[T1]{fontenc}
\usepackage[utf8]{inputenc}
\usepackage{lmodern}
\usepackage{amsmath,amssymb,amsthm,mathtools}
\usepackage{microtype}
\usepackage{hyperref}
\usepackage{xcolor}
\usepackage{array}
\usepackage{booktabs}
\usepackage{enumitem}

\hypersetup{
  colorlinks=true,
  linkcolor=blue!60!black,
  citecolor=blue!60!black,
  urlcolor=blue!60!black
}

\newtheorem{lemma}{Lemma}

\newtheorem{theorem}[lemma]{Theorem}
\newtheorem{corollary}[lemma]{Corollary}

\newcommand{\F}{\mathbb{F}}
\newcommand{\supp}{\operatorname{supp}}
\newcommand{\wt}{\operatorname{wt}}
\newcommand{\Id}{I}
\newcommand{\X}{X}
\newcommand{\Y}{Y}
\newcommand{\Z}{Z}

\newcommand{\calL}{\mathcal{L}}

\newcommand{\conf}{\operatorname{conf}}
\newcommand{\anti}{\operatorname{anti}}

\title{{\fontfamily{ppl}\selectfont \textbf{Counting anticommuting Pauli pairs in linear time}}}
\author{
    Hyunho Cha and Jungwoo Lee\\
    \small NextQuantum and Department of Electrical and Computer Engineering\\
    \small Seoul National University, Seoul 08826, Republic of Korea\\
    \small \texttt{\{ovalavo, junglee\}@snu.ac.kr}
}
\date{}

\begin{document}
\maketitle

\begin{abstract}
Many quantum computing workflows manipulate long lists of Pauli strings.
A basic classical subroutine involves taking $m$ Pauli strings on $n$ qubits, each of weight bounded by a constant, to determine if they are pairwise commuting, identify any counterexamples, or calculate the exact number of anticommuting unordered pairs.
The standard general-purpose route represents Pauli strings in binary symplectic form and checks pairs in $O(m^2)$ time.
Here, we provide an $O(m)$ algorithm for the bounded locality regime.
It maintains counts of all labeled subpatterns of previously inserted strings and answers each new string query by a subset zeta identity.
Our algorithm is particularly useful for processing large collections of Pauli strings within the bounded locality regime.
\end{abstract}


\section{Introduction}

Quantum states are vectors, and quantum gates are matrices. A great deal of quantum algorithm design is nevertheless written in a much smaller symbolic language consisting of products of the four one-qubit matrices
$
\Id,
$
$
\X=(\begin{smallmatrix}0&1\\1&0\end{smallmatrix}),
$
$
\Y=(\begin{smallmatrix}0&-i\\ i&0\end{smallmatrix}),
$
and
$
\Z=(\begin{smallmatrix}1&0\\0&-1\end{smallmatrix}).
$
A tensor product such as $\X\otimes \Id\otimes \Z\otimes \Y$ is called a Pauli string. These strings form an operator basis, so Hamiltonians, observables, stabilizer checks, Pauli rotations, and many intermediate objects in quantum software can be stored as weighted sums of Pauli strings \cite{aaronson2004improved, nielsen2010quantum}.

Consider the task of local Pauli commutation certification and anticommutation counting, which is a common classical subroutine within quantum algorithms and software. Given an input of $m$ sparse $n$-qubit Pauli strings $P^{(1)},\dots,P^{(m)}$ where each string acts nontrivially on at most $k$ qubits, the objective is to produce one of several outputs based on the specific requirements of the workflow. Specifically, the routine should be able to calculate the exact number of unordered anticommuting pairs $\{i,j\}$, provide a certificate that all strings commute pairwise, or, in cases where they do not all commute, identify a witness pair $(i,j)$ such that $P^{(i)}P^{(j)}=-P^{(j)}P^{(i)}$.

This primitive is important because simultaneous measurement and simultaneous diagonalization require commutation. Variational quantum eigensolver measurement reduction partitions Pauli terms into commuting families, frequently through graph coloring or clique cover formulations \cite{gokhale2020n, hamamura2020efficient, verteletskyi2020measurement, yen2020measuring}. Recent partitioning methods avoid some graph costs for dense operator bases but explicitly target dense settings rather than arbitrary sparse local Hamiltonians \cite{reggio2024fast}. Pauli-based computation uses sequences of pairwise commuting Pauli measurements \cite{peres2023quantum}. Modern Pauli arithmetic libraries optimize multiplication and commutator testing by bit operations, but the batching problem still appears naturally whenever many pairs must be screened or many candidate commuting sets must be verified \cite{muller2026pauliengine}.

Listing all anticommuting edges can take $\Omega(m^2)$ time when the output graph is dense. Exact counting and certification, however, have only a logarithmic-size or constant-size output and are exactly what many verification steps need. In this work, we improve the pairwise graph baseline from quadratic to linear in $m$ when $k$ is fixed.

\section{Preliminaries}

\subsection{Qubits, tensor products, and gates}

A single qubit is a unit vector in the two-dimensional complex vector space $\mathbb{C}^2$. The computational basis is
\[
    |0\rangle=\begin{pmatrix}1\\0\end{pmatrix},
    \qquad
    |1\rangle=\begin{pmatrix}0\\1\end{pmatrix}.
\]
An $n$-qubit pure state is a unit vector in the tensor product space
\[
    (\mathbb{C}^2)^{\otimes n}
    = \underbrace{\mathbb{C}^2\otimes\cdots\otimes\mathbb{C}^2}_{\times n},
\]
which has dimension $2^n$. A quantum gate on $n$ qubits is a $2^n\times 2^n$ unitary matrix, meaning a complex matrix $U$ satisfying $U^\dagger U=UU^\dagger=I_{2^n}$.

When matrices $A$ and $B$ act on two different qubit registers, their combined action is described by the tensor product $A\otimes B$. For matrices $A,B,C,D$ of compatible sizes,
\[
    (A\otimes B)(C\otimes D)=(AC)\otimes(BD).
\]
The same identity extends to any number of tensor factors.

\subsection{Pauli letters and Pauli strings}

The one-qubit Pauli letters are
$
\Id,
$
$
\X=(\begin{smallmatrix}0&1\\1&0\end{smallmatrix}),
$
$
\Y=(\begin{smallmatrix}0&-i\\ i&0\end{smallmatrix}),
$
and
$
\Z=(\begin{smallmatrix}1&0\\0&-1\end{smallmatrix}).
$
They satisfy
\[
  \X^2=\Y^2=\Z^2=\Id,
\]
and any two distinct non-identity Pauli matrices anticommute:
\[
  \X\Y=-\Y\X,
  \qquad
  \Y\Z=-\Z\Y,
  \qquad
  \Z\X=-\X\Z.
\]
The identity commutes with all three.

A phase-free $n$-qubit Pauli string is a word
\[
        P=(P_1,\ldots,P_n)\in\{\Id,\X,\Y,\Z\}^n,
\]
identified with the matrix
\[
        P_1\otimes P_2\otimes\cdots\otimes P_n.
\]
The support and weight of $P$ are
\[
   \supp(P)=\{j\in\{1,\ldots,n\}:P_j\neq\Id\},
   \qquad
   \wt(P)=|\supp(P)|.
\]
The string is $k$-local if $\wt(P)\leq k$. Global phases $\pm1,\pm i$ are ignored because they do not affect whether two strings commute.

Sparse input means that $P$ is stored as the list of pairs $(j,P_j)$ for $j\in\supp(P)$, not as all $n$ letters.

\subsection{Commutation and anticommutation}

Two square matrices $A$ and $B$ commute if $AB=BA$ and anticommute if $AB=-BA$. For Pauli strings, exactly one of these two possibilities holds.

For two Pauli strings $P,Q$, define their conflict set
\[
    \conf(P,Q)=\{j: P_j\neq\Id,
                   \ Q_j\neq\Id,
                   \ P_j\neq Q_j\}.
\]
This is the set of qubits where both strings act nontrivially and the two one-qubit Pauli letters are different.

\begin{lemma}
\label{lem:conflict-sign}
For any two phase-free $n$-qubit Pauli strings $P,Q\in\{\Id,\X,\Y,\Z\}^n$,
\[
      PQ=(-1)^{|\conf(P,Q)|}QP.
\]
Consequently, $P$ and $Q$ commute if $|\conf(P,Q)|$ is even, and anticommute if $|\conf(P,Q)|$ is odd.
\end{lemma}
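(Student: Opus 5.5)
The plan is to reduce the statement to a qubit-by-qubit computation using the mixed-product identity $(A\otimes B)(C\otimes D)=(AC)\otimes(BD)$, extended to $n$ tensor factors. Writing $P=P_1\otimes\cdots\otimes P_n$ and $Q=Q_1\otimes\cdots\otimes Q_n$, we get
\[
  PQ=(P_1Q_1)\otimes\cdots\otimes(P_nQ_n),
  \qquad
  QP=(Q_1P_1)\otimes\cdots\otimes(Q_nP_n).
\]
It therefore suffices to show that, for each qubit $j$, we have $P_jQ_j=\varepsilon_j\,Q_jP_j$ with $\varepsilon_j=-1$ if $j\in\conf(P,Q)$ and $\varepsilon_j=+1$ otherwise.

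The single-qubit claim follows by a short case analysis on the pair $(P_j,Q_j)$. If either letter is $\Id$, the two letters commute, since the identity commutes with everything. If $P_j=Q_j$, they trivially commute. The only remaining case is two distinct non-identity letters, which is precisely the condition $j\in\conf(P,Q)$. In that case the listed relations $\X\Y=-\Y\X$, $\Y\Z=-\Z\Y$ and $\Z\X=-\X\Z$, read in either order, give anticommutation.

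Next, I pull the scalars out of the tensor product using multilinearity, $(\varepsilon_1A_1)\otimes\cdots\otimes(\varepsilon_nA_n)=(\prod_j\varepsilon_j)A_1\otimes\cdots\otimes A_n$. This gives $PQ=\bigl(\prod_j\varepsilon_j\bigr)QP=(-1)^{|\conf(P,Q)|}QP$.

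For the consequence, if $|\conf(P,Q)|$ is even then $PQ=QP$ directly. If it is odd, then $PQ=-QP$. Since Pauli strings are nonzero (in fact unitary), $PQ\neq 0$, so the strings do not also commute, and exactly one of the two relations holds.

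I expect no genuine obstacle here. The only care needed is to make the scalar extraction from the tensor product explicit and to verify that the case split on $(P_j,Q_j)$ is exhaustive.
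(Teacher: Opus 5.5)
Your proof is correct and complete. The paper states this lemma without a written proof, but its preliminaries supply exactly the ingredients you combine: the mixed-product identity extended to $n$ factors, the anticommutation of distinct non-identity letters, and the fact that the identity commutes with everything. Your factor-by-factor argument with scalar extraction is therefore the intended one, and your remark that $PQ\neq 0$ usefully justifies the paper's assertion that exactly one of the two relations holds.
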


\subsection{The binary symplectic baseline}

A standard representation maps each Pauli string to two binary vectors $x,z\in\F_2^n$ by
\[
  \Id\leftrightarrow(0,0),\qquad
  \X\leftrightarrow(1,0),\qquad
  \Z\leftrightarrow(0,1),\qquad
  \Y\leftrightarrow(1,1).
\]
If $P\leftrightarrow(x,z)$ and $Q\leftrightarrow(x',z')$, then
\[
   P\text{ anticommutes with }Q
   \quad\Longleftrightarrow\quad
   x\cdot z' + z\cdot x' =1\pmod 2.
\]
This is the binary symplectic form. It gives a very fast single-pair test, especially with word-level bit operations \cite{aaronson2004improved, muller2026pauliengine}. For a list of $m$ strings, however, applying this test to every unordered pair costs $\Theta(m^2)$ pair checks. If the strings are sparse and $k$-local, each pair can be checked in $O(k)$ time by merging or hashing their supports, for total $O(m^2k)$.

\section{Problem formulation}

Fix positive integers $n,m,k$. The input is a sequence
\[
    P^{(1)},\ldots,P^{(m)}\in\{\Id,\X,\Y,\Z\}^n
\]
with $\wt\!\big(P^{(i)}\big)\leq k$ for every $i$, given in sparse form.

The anticommutation count is
\[
  A\big(P^{(1)},\ldots,P^{(m)}\big)
   = \left|\left\{\{i,j\}:1\leq i<j\leq m,
      \ \left|\conf\!\big(P^{(i)},P^{(j)}\big)\right|\equiv 1\pmod2\right\}\right|.
\]
The certification problem is to decide whether $A\big(P^{(1)},\ldots,P^{(m)}\big)=0$. The witness version must return a pair $(i,j)$ with $i<j$ and $P^{(i)}P^{(j)}=-P^{(j)}P^{(i)}$ when such a pair exists.
If the requested output is the full anticommutation graph, then the output itself can have $\Theta(m^2)$ edges. Our subquadratic result is for counting, certification, and witness finding, not for explicitly listing all edges.

\section{The locality-zeta data structure}

\subsection{Patterns}

A labeled Pauli pattern is a pair $(A,a)$, where $A\subseteq\{1,\ldots,n\}$ is a finite set of qubit positions and
\[
      a:A\to\{\X,\Y,\Z\}
\]
assigns a non-identity Pauli letter to every position in $A$. The empty pattern is $(\varnothing,\varnothing)$.

A Pauli string $Q$ contains the pattern $(A,a)$, written
\[
      Q\succeq(A,a),
\]
if $A\subseteq\supp(Q)$ and $Q_j=a(j)$ for every $j\in A$.

For a multiset $G$ of previously inserted Pauli strings, define the table of pattern counts
\[
      D_G(A,a)=|\{Q\in G: Q\succeq(A,a)\}|.
\]
The table stores counts for patterns that occur inside the inserted strings. If a key has never been inserted, its count is interpreted as zero. In particular,
\[
      D_G(\varnothing,\varnothing)=|G|.
\]

\subsection{Insertion}

To insert a $w$-local Pauli string $Q$, enumerate every subset $A\subseteq\supp(Q)$ and increment
\[
      D_G(A,Q|_A)
\]
by one, where $Q|_A$ is the letter assignment on $A$.

\subsection{Query}

Let $P$ be a query string with support $S=\supp(P)$. For a subset $A\subseteq S$, define
\[
      \calL_P(A)=\{a:A\to\{\X,\Y,\Z\}:a(j)\neq P_j\text{ for every }j\in A\}.
\]
Thus $\calL_P(A)$ contains the assignments that force a conflict with $P$ on every position of $A$.

Define
\[
      F_G(P,A)=\sum_{a\in\calL_P(A)}D_G(A,a).
\]
This is the number of previously inserted strings that conflict with $P$ on every coordinate in $A$, with no restriction on other coordinates.

The query computes
\[
      Z_G(P)=\sum_{A\subseteq S}(-2)^{|A|}F_G(P,A)
\]
and returns
\[
      \anti_G(P)=\frac{|G|-Z_G(P)}{2}.
\]
The number $\anti_G(P)$ will be proved to be the number of strings $Q\in G$ that anticommute with $P$.

\subsection{Pseudocode}

The following pseudocode uses a dictionary $D$ whose keys are labeled patterns. The function \textsc{ConflictingAssignments}$(P,A)$ enumerates the assignments $a$ on $A$ with $a(j)\neq P_j$ for each $j\in A$.

\medskip
\noindent\fbox{%
\begin{minipage}{0.94\linewidth}
\textbf{Procedure \textsc{Insert}$(Q,D)$}
\begin{enumerate}[leftmargin=2.2em]
\item For every subset $A\subseteq\supp(Q)$:
\begin{enumerate}[leftmargin=2.0em]
\item Let $a=Q|_A$.
\item Set $D[(A,a)]\leftarrow D[(A,a)]+1$.
\end{enumerate}
\end{enumerate}

\textbf{Procedure \textsc{AntiCountAgainstPrevious}$(P,D,N)$}
\begin{enumerate}[leftmargin=2.2em]
\item Set $Z\leftarrow0$.
\item For every subset $A\subseteq\supp(P)$:
\begin{enumerate}[leftmargin=2.0em]
\item Set $F\leftarrow0$.
\item For every $a\in\textsc{ConflictingAssignments}(P,A)$:
\begin{enumerate}[leftmargin=1.8em]
\item Set $F\leftarrow F+D[(A,a)]$.
\end{enumerate}
\item Set $Z\leftarrow Z+(-2)^{|A|}F$.
\end{enumerate}
\item Return $(N-Z)/2$.
\end{enumerate}

\textbf{Procedure \textsc{CountAllAnticommutingPairs}$\big(P^{(1)},\ldots,P^{(m)}\big)$}
\begin{enumerate}[leftmargin=2.2em]
\item Initialize an empty dictionary $D$, set $N\leftarrow0$, and set $T\leftarrow0$.
\item For $i=1,\ldots,m$:
\begin{enumerate}[leftmargin=2.0em]
\item Set $c\leftarrow\textsc{AntiCountAgainstPrevious}\big(P^{(i)},D,N\big)$.
\item Set $T\leftarrow T+c$.
\item Run \textsc{Insert}$\big(P^{(i)},D\big)$.
\item Set $N\leftarrow N+1$.
\end{enumerate}
\item Return $T$.
\end{enumerate}
\end{minipage}}

\medskip
For certification with a witness, run the same loop. If $c>0$ for the current $P^{(i)}$, scan the previous strings $P^{(1)},\ldots,P^{(i-1)}$ using the elementary parity test from Lemma~\ref{lem:conflict-sign} until an anticommuting pair is found. This scan costs $O(mk)$ additional time.

\section{Correctness}

\begin{lemma}
\label{lem:FG-meaning}
Let $G$ be the multiset of strings inserted into the dictionary. Let $P$ be a query string and let $A\subseteq\supp(P)$. Then
\[
      F_G(P,A)=|\{Q\in G:A\subseteq\conf(P,Q)\}|.
\]
\end{lemma}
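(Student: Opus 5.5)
The plan is a double-counting argument. First I will unfold the definitions and swap the order of summation. Since $D_G(A,a)$ counts the strings $Q\in G$ with $Q\succeq(A,a)$, we have
\[
  F_G(P,A)=\sum_{a\in\calL_P(A)}\sum_{Q\in G}[\,Q\succeq(A,a)\,]
          =\sum_{Q\in G}\bigl|\{a\in\calL_P(A):Q\succeq(A,a)\}\bigr|,
\]
where $[\cdot]$ is the indicator and $G$ is treated as a multiset, so each inserted copy is counted separately. It then suffices to show, for each fixed $Q\in G$, that the inner count equals $1$ if $A\subseteq\conf(P,Q)$ and $0$ otherwise.

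Next I will use that a string can contain at most one pattern on a given position set $A$. The relation $Q\succeq(A,a)$ forces $A\subseteq\supp(Q)$ and $a=Q|_A$. So the set $\{a\in\calL_P(A):Q\succeq(A,a)\}$ is nonempty exactly when $A\subseteq\supp(Q)$ and $Q|_A\in\calL_P(A)$, and then it is the singleton $\{Q|_A\}$.

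Finally I will check that this condition is equivalent to $A\subseteq\conf(P,Q)$. Suppose $A\subseteq\supp(Q)$ and $Q_j\neq P_j$ for all $j\in A$. Since $A\subseteq S=\supp(P)$, every $j\in A$ satisfies $P_j\neq\Id$, $Q_j\neq\Id$ and $P_j\neq Q_j$, so $j\in\conf(P,Q)$. Conversely, $A\subseteq\conf(P,Q)$ gives $A\subseteq\supp(Q)$ and $Q_j\neq P_j$ on $A$, so $Q|_A$ is a valid assignment into $\{\X,\Y,\Z\}$ lying in $\calL_P(A)$. Summing over $Q$ gives the claim.

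There is no real obstacle. The only points needing care are the case $A=\varnothing$ and the role of the hypothesis $A\subseteq\supp(P)$. When $A=\varnothing$, the empty assignment vacuously lies in $\calL_P(\varnothing)$, so both sides equal $|G|$. The hypothesis $A\subseteq\supp(P)$ is used exactly once, in the forward direction, to get $P_j\neq\Id$.
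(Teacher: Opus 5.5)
Your proof is correct and follows essentially the same argument as the paper: both observe that $Q$ is counted by $D_G(A,a)$ only for $a=Q|_A$, and that $A\subseteq\supp(Q)$ together with $Q|_A\in\calL_P(A)$ is equivalent to $A\subseteq\conf(P,Q)$, using $A\subseteq\supp(P)$ to get $P_j\neq\Id$. Your explicit interchange of summation is a tidier packaging of the paper's remark that a string cannot be counted by two different assignments on the same set $A$.
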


\begin{proof}
By definition,
\[
      F_G(P,A)=\sum_{a\in\calL_P(A)}D_G(A,a).
\]
For a fixed assignment $a\in\calL_P(A)$, the value $D_G(A,a)$ counts exactly the strings $Q\in G$ such that $Q_j=a(j)$ for all $j\in A$. Since $a\in\calL_P(A)$, every such $a(j)$ is a non-identity Pauli letter different from $P_j$. Also $j\in A\subseteq\supp(P)$, so $P_j\neq\Id$. Therefore each counted $Q$ has $j\in\conf(P,Q)$ for every $j\in A$, which means $A\subseteq\conf(P,Q)$.

Conversely, suppose $Q\in G$ satisfies $A\subseteq\conf(P,Q)$. Then, for every $j\in A$, the letter $Q_j$ is non-identity and different from $P_j$. Hence the assignment $a=Q|_A$ lies in $\calL_P(A)$, and $Q$ is counted by the summand $D_G(A,a)$. A string cannot be counted by two different assignments on the same set $A$, because its restriction $Q|_A$ is unique. Thus the sum counts exactly the strings $Q\in G$ with $A\subseteq\conf(P,Q)$.
\end{proof}

\begin{lemma}
\label{lem:zeta-identity}
For every finite set $C$,
\[
     \sum_{A\subseteq C}(-2)^{|A|}=(-1)^{|C|}.
\]
\end{lemma}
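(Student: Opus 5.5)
The plan is to group the subsets $A\subseteq C$ by their cardinality and then apply the binomial theorem. Write $c=|C|$. Since $C$ is finite, it has exactly $\binom{c}{r}$ subsets of size $r$ for each $0\le r\le c$, and every subset of size $r$ contributes the same term $(-2)^r$. This gives
\[
  \sum_{A\subseteq C}(-2)^{|A|}=\sum_{r=0}^{c}\binom{c}{r}(-2)^r .
\]

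Next, I will recognize the right-hand side as the binomial expansion of $(1+x)^c$ evaluated at $x=-2$. That is, $\sum_{r}\binom{c}{r}1^{c-r}(-2)^r=(1-2)^c=(-1)^c$, which is the claimed identity.

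As an alternative that avoids binomial coefficients, I could argue by induction on $|C|$. When $C=\varnothing$, the only subset is $\varnothing$ itself, so the sum is $1=(-1)^0$. For the inductive step, pick any element $x\in C$ and split the subsets of $C$ according to whether they contain $x$. Those that omit $x$ contribute the sum for $C\setminus\{x\}$. Those that contain $x$ contribute $-2$ times that same sum. The total is therefore $(1-2)$ times the sum for $C\setminus\{x\}$, which equals $(-1)\cdot(-1)^{|C|-1}=(-1)^{|C|}$.

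There is no genuine obstacle in either route. The only points needing care are the empty-set convention, namely that $(-2)^0=1$ and that $\varnothing$ is counted as a subset, and the finiteness of $C$. Finiteness holds in the intended application, because $C=\conf(P,Q)\subseteq\supp(P)$ has at most $k$ elements.
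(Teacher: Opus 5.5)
Your proposal is correct and follows the paper's proof: group the subsets by size and apply the binomial theorem to obtain $(1-2)^{|C|}=(-1)^{|C|}$. Your inductive alternative, which splits on a fixed element, is also valid, but it is not needed.
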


\begin{proof}
Let $r=|C|$. The number of subsets $A\subseteq C$ of size $t$ is $\binom{r}{t}$. Therefore
\[
     \sum_{A\subseteq C}(-2)^{|A|}
     =\sum_{t=0}^{r}\binom{r}{t}(-2)^t = (1-2)^r=(-1)^r=(-1)^{|C|}.
\]
\end{proof}

\begin{theorem}
\label{thm:query-value}
Let $G$ be any multiset of previously inserted phase-free Pauli strings, and let $P$ be any query string. The query value
\[
      \anti_G(P)=\frac{|G|-Z_G(P)}{2},
      \qquad
      Z_G(P)=\sum_{A\subseteq\supp(P)}(-2)^{|A|}F_G(P,A),
\]
is exactly the number of strings $Q\in G$ that anticommute with $P$.
\end{theorem}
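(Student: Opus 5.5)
The plan is to rewrite $Z_G(P)$ as a sum over the inserted strings $Q\in G$ by exchanging the order of summation. Then the subset identity of Lemma~\ref{lem:zeta-identity} turns each string's contribution into a sign that records the parity of its conflict set. Throughout, $S=\supp(P)$, and $G$ is treated as a multiset, so each inserted copy is counted separately.

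First, I will use Lemma~\ref{lem:FG-meaning} to write
\[
  F_G(P,A)=\sum_{Q\in G}\mathbf{1}\bigl[A\subseteq\conf(P,Q)\bigr]
\]
for every $A\subseteq S$. Substituting this into the definition of $Z_G(P)$ and swapping the two finite sums gives
\[
  Z_G(P)=\sum_{Q\in G}\ \sum_{A\subseteq S,\ A\subseteq\conf(P,Q)}(-2)^{|A|}.
\]
Since $\conf(P,Q)\subseteq\supp(P)=S$ by definition, the condition $A\subseteq S$ is redundant. The inner sum is therefore exactly $\sum_{A\subseteq\conf(P,Q)}(-2)^{|A|}$, and Lemma~\ref{lem:zeta-identity} evaluates it to $(-1)^{|\conf(P,Q)|}$.

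Next, Lemma~\ref{lem:conflict-sign} says that $(-1)^{|\conf(P,Q)|}$ equals $+1$ when $Q$ commutes with $P$ and $-1$ when it anticommutes. Let $c$ be the number of $Q\in G$ commuting with $P$, and let $t$ be the number anticommuting with $P$. Then $c+t=|G|$ and $Z_G(P)=c-t$. Hence $|G|-Z_G(P)=2t$, so $\anti_G(P)=t$, which is the claim. This also shows that the division by $2$ always yields an integer.

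The only step requiring care is the interchange and the observation $\conf(P,Q)\subseteq\supp(P)$. This is what allows the inner sum to range over all subsets of the conflict set, which is the hypothesis of Lemma~\ref{lem:zeta-identity}. Everything else follows directly from the earlier lemmas.
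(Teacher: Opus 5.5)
Your proposal is correct and follows the paper's proof essentially step for step: you rewrite $F_G(P,A)$ as a sum of indicators via Lemma~\ref{lem:FG-meaning}, interchange the finite sums, use $\conf(P,Q)\subseteq\supp(P)$ to reduce the inner sum to Lemma~\ref{lem:zeta-identity}, and then read off the signs via Lemma~\ref{lem:conflict-sign}. Your remark that this also shows $|G|-Z_G(P)$ is even is a small point the paper leaves implicit.
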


\begin{proof}
By Lemma~\ref{lem:FG-meaning}, for every $A\subseteq\supp(P)$,
\[
      F_G(P,A)=|\{Q\in G:A\subseteq\conf(P,Q)\}|.
\]
Thus
\[
\begin{aligned}
      Z_G(P)
      &=\sum_{A\subseteq\supp(P)}(-2)^{|A|}
        |\{Q\in G:A\subseteq\conf(P,Q)\}| \\
      &=\sum_{A\subseteq\supp(P)}(-2)^{|A|}
        \sum_{Q\in G}{\bf 1}[A\subseteq\conf(P,Q)],
\end{aligned}
\]
where ${\bf 1}[E]$ is $1$ if $E$ is true and $0$ otherwise. Since all sums are finite, exchange the order of summation:
\[
      Z_G(P)=\sum_{Q\in G}\sum_{A\subseteq\supp(P)}(-2)^{|A|}{\bf 1}[A\subseteq\conf(P,Q)].
\]
The inner sum only receives contributions from subsets $A\subseteq\conf(P,Q)$, because $\conf(P,Q)\subseteq\supp(P)$. Hence
\[
      Z_G(P)=\sum_{Q\in G}\sum_{A\subseteq\conf(P,Q)}(-2)^{|A|}.
\]
By Lemma~\ref{lem:zeta-identity}, the inner sum equals $(-1)^{|\conf(P,Q)|}$. Therefore
\[
      Z_G(P)=\sum_{Q\in G}(-1)^{|\conf(P,Q)|}.
\]
By Lemma~\ref{lem:conflict-sign}, $Q$ anticommutes with $P$ exactly when $|\conf(P,Q)|$ is odd. If $Q$ commutes with $P$, its contribution to $Z_G(P)$ is $+1$. If it anticommutes, its contribution is $-1$. Let $c$ be the number of strings in $G$ that anticommute with $P$. Then $|G|-c$ strings commute with $P$, so
\[
      Z_G(P)=(|G|-c)-c=|G|-2c.
\]
Solving for $c$ gives
\[
      c=\frac{|G|-Z_G(P)}{2}=\anti_G(P).
\]
Thus the query returns the desired anticommutation count.
\end{proof}

\begin{theorem}
\label{thm:count-return}
The procedure \textsc{CountAllAnticommutingPairs} returns
\[
  A\big(P^{(1)},\ldots,P^{(m)}\big)
   =\left|\left\{\{i,j\}:1\leq i<j\leq m,
      \ P^{(i)}P^{(j)}=-P^{(j)}P^{(i)}\right\}\right|.
\]
\end{theorem}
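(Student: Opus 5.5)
The plan is to prove this by a loop invariant on the main loop of \textsc{CountAllAnticommutingPairs}, and then apply Theorem~\ref{thm:query-value} at each iteration. Specifically, I would show by induction on $i$ that at the start of iteration $i$ (for $i=1,\ldots,m+1$, the case $i=m+1$ being loop exit) three things hold. First, the dictionary $D$ equals the table $D_{G_{i-1}}$ with $G_{i-1}=\{P^{(1)},\ldots,P^{(i-1)}\}$ as a multiset, where missing keys are read as zero. Second, $N=i-1=|G_{i-1}|$. Third,
\[
T=\left|\left\{\{j,l\}:1\leq j<l\leq i-1,\ P^{(j)}P^{(l)}=-P^{(l)}P^{(j)}\right\}\right|.
\]
The base case $i=1$ is immediate: the dictionary is empty, so every count is zero, which agrees with $D_\varnothing$; also $N=0$, $T=0$, and the pair set is empty.

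For the inductive step, I would argue as follows.

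\begin{enumerate}[leftmargin=2.2em]
\item \emph{The query.} The call \textsc{AntiCountAgainstPrevious}$(P^{(i)},D,N)$ computes exactly the quantity $\anti_{G_{i-1}}(P^{(i)})$ from Theorem~\ref{thm:query-value}. This uses the invariant $D=D_{G_{i-1}}$ and $N=|G_{i-1}|$. The inner loop sums $D[(A,a)]$ over $a\in\calL_P(A)$, giving $F_{G_{i-1}}(P,A)$, and the outer loop accumulates $Z_{G_{i-1}}(P)$.
\item \emph{The count.} By Theorem~\ref{thm:query-value}, $c$ is therefore the number of indices $j<i$ with $P^{(j)}$ anticommuting with $P^{(i)}$. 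This is a count of indices, not of distinct strings, because $G$ is a multiset, so repeated strings are counted with multiplicity. Consequently, adding $c$ to $T$ adds precisely the unordered pairs $\{j,i\}$ with $j<i$, each exactly once.
\item \emph{The insertion.} \textsc{Insert} increments $D(A,P^{(i)}|_A)$ for every $A\subseteq\supp(P^{(i)})$. These are exactly the patterns contained in $P^{(i)}$: $P^{(i)}\succeq(A,a)$ iff $A\subseteq\supp(P^{(i)})$ and $a=P^{(i)}|_A$. Moreover, distinct subsets $A$ give distinct keys, so no key is incremented twice. Hence $D_{G_{i-1}}$ becomes $D_{G_i}$, and $N$ becomes $i$.
\end{enumerate}

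At exit ($i=m+1$), the invariant gives $T=A(P^{(1)},\ldots,P^{(m)})$. Here I use that every unordered pair $\{j,l\}$ with $j<l$ is counted once, namely at iteration $l$. I also use that the anticommutation condition in the statement coincides with the odd-conflict condition in the definition of $A$ by Lemma~\ref{lem:conflict-sign}.

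I expect no step to be a real obstacle. The only points needing care are the bookkeeping ones: checking that the insertion procedure realizes the definition of $D_G$ exactly, including the empty pattern, which is incremented once per string, and checking that the query is made before insertion, so that a string is never paired with itself.
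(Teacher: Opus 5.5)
Your proposal is correct and follows essentially the same route as the paper. The paper's proof likewise notes that at iteration $i$ the dictionary holds exactly $\{P^{(1)},\ldots,P^{(i-1)}\}$, applies Theorem~\ref{thm:query-value} to get $c_i$, and observes that each unordered pair is counted once, at the iteration of its larger index; your version simply makes this explicit as an induction with invariants on $D$, $N$ and $T$, and you additionally check the insertion bookkeeping and the link to Lemma~\ref{lem:conflict-sign}.
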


\begin{proof}
At the start of iteration $i$, the dictionary contains exactly the previously processed multiset
\[
      G_i=\big\{P^{(1)},\ldots,P^{(i-1)}\big\},
\]
because the algorithm inserts $P^{(j)}$ once at the end of iteration $j$ and inserts nothing else. By Theorem~\ref{thm:query-value}, the query value $c_i$ computed at iteration $i$ is exactly the number of indices $j<i$ such that $P^{(i)}$ anticommutes with $P^{(j)}$.

Every unordered pair $\{j,i\}$ with $j<i$ is considered exactly once, namely during iteration $i$. Therefore the final accumulated sum
\[
      T=\sum_{i=1}^{m}c_i
\]
counts each unordered anticommuting pair once and counts no commuting pair. This is exactly $A\big(P^{(1)},\ldots,P^{(m)}\big)$.
\end{proof}

\begin{corollary}
The certification variant returns ``all commute'' if and only if all input strings commute pairwise. If it returns a witness pair $(i,j)$, then $P^{(i)}P^{(j)}=-P^{(j)}P^{(i)}$.
\end{corollary}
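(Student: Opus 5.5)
The plan is to reduce the corollary to Theorem~\ref{thm:count-return} and Theorem~\ref{thm:query-value}, together with the parity test of Lemma~\ref{lem:conflict-sign}. I first pin down how the certification variant behaves. It runs the same loop as \textsc{CountAllAnticommutingPairs}. At iteration $i$ it computes $c_i$, and if $c_i>0$ it scans $P^{(1)},\ldots,P^{(i-1)}$ with the parity test until it finds an anticommuting pair, which it returns. If every $c_i$ equals $0$, it returns ``all commute.''

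For the ``if and only if,'' I would use the fact that each $c_i$ is a nonnegative integer. Theorem~\ref{thm:query-value} shows that $c_i$ counts the indices $j<i$ with $P^{(j)}$ anticommuting with $P^{(i)}$. The algorithm returns ``all commute'' exactly when $c_i=0$ for every $i$. That happens exactly when $\sum_i c_i = A\big(P^{(1)},\ldots,P^{(m)}\big)=0$ by Theorem~\ref{thm:count-return}, because a sum of nonnegative terms vanishes only if every term does. Finally, $A=0$ means no unordered pair anticommutes, which is pairwise commutation. One point needs care here. If some $c_i>0$, the algorithm stops to return a witness instead of continuing, so ``all commute'' is never output in that case, and the equivalence survives the early exit.

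For witness soundness, I will argue that the scan at iteration $i$ always succeeds when $c_i>0$. By Theorem~\ref{thm:query-value}, some $j<i$ has $P^{(j)}$ anticommuting with $P^{(i)}$. By Lemma~\ref{lem:conflict-sign}, the elementary test recognizes this exactly, since it computes the parity of $|\conf(P^{(j)},P^{(i)})|$. So the scan halts on some such $j$ and returns $(j,i)$ with $j<i$. Each pair it returns has odd conflict parity, so Lemma~\ref{lem:conflict-sign} gives $P^{(j)}P^{(i)}=-P^{(i)}P^{(j)}$. The scan never returns a false witness, because it only accepts pairs that pass the exact test.

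The proof has no real obstacle. The only subtle step is stating precisely that the variant terminates at the first positive $c_i$, and noting that the scan cannot come up empty-handed. Both follow from the exactness of $c_i$ guaranteed by Theorem~\ref{thm:query-value}.
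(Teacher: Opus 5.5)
Your proof is correct and follows the argument the paper intends. The paper gives no separate proof of this corollary and treats it as an immediate consequence of three facts: each $c_i$ is exact (Theorem~\ref{thm:query-value}), the count identity of Theorem~\ref{thm:count-return}, and the exact parity test of Lemma~\ref{lem:conflict-sign}; your explicit handling of the early exit, and of why the scan cannot come up empty when $c_i>0$, fills in details the paper leaves unstated.
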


\section{Running time, space, and optimality}

We use the following computational model. The sparse input stores each non-identity letter as a pair consisting of a qubit index and one of three labels. Dictionary lookup and update for a key of length $r$ costs $O(r)$ expected time, as with a standard hash table.

\begin{lemma}[Insertion cost]
\label{lem:insertion_cost}
Inserting a Pauli string $Q$ of weight $w$ performs $2^w$ dictionary updates and takes $O(w2^w)$ expected time. The total number of stored key occurrences over all insertions is at most $\sum_i2^{\wt\!\big(P^{(i)}\big)}$.
\end{lemma}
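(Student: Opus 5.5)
The plan is to count directly the work done by \textsc{Insert}$(Q,D)$ and charge it to the cost model fixed just before the statement. I would first note that the support $\supp(Q)$ is available from the sparse input as a list of $w$ pairs $(j,Q_j)$. The loop in \textsc{Insert} ranges over all subsets $A\subseteq\supp(Q)$, and there are exactly $2^w$ of them. Each iteration performs exactly one dictionary update, $D[(A,Q|_A)]\leftarrow D[(A,Q|_A)]+1$. Hence the number of updates is exactly $2^w$.

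For the time bound, I would enumerate subsets by iterating over bitmasks $b\in\{0,1\}^w$ indexed against the sorted list of $(j,Q_j)$ pairs. Sorting the support once costs $O(w\log w)$, which is at most $O(w2^w)$. For each mask I build the key $(A,Q|_A)$ as the ordered list of selected pairs, in $O(w)$ time; the key has length $|A|\le w$. By the model, one lookup and update then costs $O(|A|)\subseteq O(w)$ expected time. Summing over $2^w$ masks gives $O(w2^w)$ expected time. Canonical ordering of keys matters so that equal patterns hash identically. Using the sorted support makes each key a canonical sequence, so no extra normalization is needed.

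For the space statement, each insertion of $P^{(i)}$ touches exactly $2^{\wt(P^{(i)})}$ keys. The number of distinct keys ever created is therefore at most the total number of touches, $\sum_i 2^{\wt(P^{(i)})}$: a key is created only when first touched, and repeated touches merely increment a counter.

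The only delicate point is the accounting of key construction cost, which must stay $O(w)$ per subset rather than incurring extra sorting per subset. I would handle it with the single presort and mask enumeration described above. One could even get amortized $O(1)$ per subset via Gray-code enumeration, but $O(w)$ suffices for the stated bound.
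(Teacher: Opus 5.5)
Your proposal is correct and follows essentially the same route as the paper: the $2^w$ subsets of $\supp(Q)$ each cost one update of expected cost $O(|A|)$, and the stored-key bound comes from total touches because aggregation only merges duplicates. The paper sums $\sum_{A\subseteq\supp(Q)}|A| = w2^{w-1}$ exactly where you use the cruder $|A|\le w$, and your presorting of the support and per-mask key construction in $O(w)$ time makes explicit a canonicalization detail the paper leaves implicit without changing the bound.
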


\begin{proof}
The insertion procedure enumerates all subsets $A\subseteq\supp(Q)$. A set of size $w$ has $2^w$ subsets. For each subset, the algorithm forms the key $(A,Q|_A)$, whose length is $|A|\leq w$, and increments its dictionary count. Thus the update time is
$
     O\!\left(\sum_{A\subseteq\supp(Q)}|A|\right).
$
Each of the $w$ support positions belongs to exactly half of the $2^w$ subsets, so
$
     \sum_{A\subseteq\supp(Q)}|A|=w2^{w-1}=O(w2^w).
$
The number of update operations is $2^w$. Summing $2^{\wt\!\big(P^{(i)}\big)}$ over all inserted strings gives the claimed bound on stored key occurrences before aggregation of duplicate keys. Aggregation can only reduce the number of distinct dictionary keys.
\end{proof}

\begin{lemma}[Query cost]
\label{lem:query-cost}
For a query string $P$ of weight $w$, \textsc{AntiCountAgainstPrevious} performs $3^w$ dictionary lookups and takes $O(w3^w)$ expected time.
\end{lemma}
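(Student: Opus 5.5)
The plan mirrors the proof of Lemma~\ref{lem:insertion_cost}: count the dictionary lookups by summing over the subsets $A$ and the conflicting assignments on each, then weight each lookup by its key length.

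\textbf{Number of lookups.} Let $S=\supp(P)$, so $|S|=w$. The query loops over every $A\subseteq S$, and for each such $A$ over every $a\in\calL_P(A)$. For each $j\in A$ we have $P_j\neq\Id$, so exactly two of the three letters $\{\X,\Y,\Z\}$ differ from $P_j$. These choices are independent across positions, hence $|\calL_P(A)|=2^{|A|}$. The total number of lookups is therefore
\[
  \sum_{A\subseteq S}2^{|A|}=\sum_{t=0}^{w}\binom{w}{t}2^t=(1+2)^w=3^w,
\]
by the same binomial manipulation used in Lemma~\ref{lem:zeta-identity}.

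\textbf{Time.} Each key $(A,a)$ has length $|A|\le w$. Under the stated model, each lookup costs $O(|A|)\le O(w)$ expected time. Building each key also fits within this bound. Enumerating the subsets and assignments with incremental updates, for example by recursion over the positions of $S$, costs $O(w)$ per generated key. This step is the main point to check: the enumeration overhead must not exceed the lookup cost. Recursion, or a Gray-code style traversal, handles it. The additions to $F$ and $Z$, and the multiplication by $(-2)^{|A|}$, cost $O(1)$ per lookup or per subset in the word-RAM model. Alternatively, $(-2)^{|A|}$ can be maintained incrementally. The final subtraction and halving take constant time. In total the query runs in $O(w\,3^w)$ expected time.

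For a sharper constant, note that $\sum_{A}|A|2^{|A|}=2w\,3^{w-1}$. This is obtained by differentiating $(1+x)^w$ at $x=2$, but only the bound $O(w3^w)$ is needed.
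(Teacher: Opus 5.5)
Your proposal is correct and follows the paper's proof: it counts $|\calL_P(A)|=2^{|A|}$ per subset, sums $\sum_{A\subseteq S}2^{|A|}=3^w$ by the binomial theorem, and charges $O(w)$ expected time per lookup. Your treatment of enumeration overhead and arithmetic, and the sharper identity $\sum_A|A|2^{|A|}=2w\,3^{w-1}$, are correct refinements that the paper leaves implicit.
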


\begin{proof}
For every subset $A\subseteq\supp(P)$, the algorithm enumerates all assignments $a\in\calL_P(A)$. Since each position in $A$ has two choices, $|\calL_P(A)|=2^{|A|}$. Hence the total number of dictionary lookups is
$
      \sum_{A\subseteq\supp(P)}2^{|A|}
      =\sum_{t=0}^{w}\binom{w}{t}2^t
      =(1+2)^w
      =3^w.
$
The key length in a lookup for subset $A$ is $|A|\leq w$, so the expected time is at most $O(w)$ per lookup, for total $O(w3^w)$.
\end{proof}

\begin{theorem}[Total complexity]
For input strings $P^{(1)},\ldots,P^{(m)}$ of weights $w_i=\wt\!\big(P^{(i)}\big)$, the counting algorithm takes expected time
$
      O\left(\sum_{i=1}^{m} w_i3^{w_i}\right)
$
and uses space
$
      O\left(\sum_{i=1}^{m} w_i2^{w_i}\right)
$
for dictionary keys plus the input storage. In particular, if $w_i\leq k$ for every $i$, the expected time is $O(mk3^k)$ and the additional space is $O(mk2^k)$.
\end{theorem}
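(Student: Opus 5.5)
The plan is to account for the cost of \textsc{CountAllAnticommutingPairs} one iteration at a time, using Lemma~\ref{lem:insertion_cost} and Lemma~\ref{lem:query-cost}, and then sum over $i$. Iteration $i$ does four things. It runs \textsc{AntiCountAgainstPrevious}$(P^{(i)},D,N)$, which costs $O(w_i3^{w_i})$ expected time by Lemma~\ref{lem:query-cost}. It performs the constant-time updates of $T$ and $N$. It runs \textsc{Insert}$(P^{(i)},D)$, which costs $O(w_i2^{w_i})$ expected time by Lemma~\ref{lem:insertion_cost}. Initialization before the loop costs $O(1)$. Since $2^{w}\le 3^{w}$, the insertion cost is absorbed, so iteration $i$ costs $O(w_i3^{w_i}+1)$.

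Linearity of expectation then gives total expected time $O\bigl(\sum_i (w_i3^{w_i}+1)\bigr)$. The additive $m$ is harmless whenever the input is nonempty in a meaningful sense, and I will note that the bound should be read as $O\bigl(m+\sum_i w_i3^{w_i}\bigr)$, or else treat weight-zero strings as costing $O(1)$ each. One point to check is that the query-side enumeration of subsets $A$ and of \textsc{ConflictingAssignments} is itself within $O(w3^w)$. This holds if subsets are enumerated in an order such as Gray code or by recursion, so that building each key costs $O(|A|)$. Lemma~\ref{lem:query-cost} already folds this into its per-lookup $O(w)$ charge, and I will simply invoke it.

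For space, I bound the dictionary by its total key length. Each insertion of $P^{(i)}$ creates at most $2^{w_i}$ new keys, by Lemma~\ref{lem:insertion_cost}. Each key $(A,a)$ has length $|A|\le w_i$, and more precisely the total length of the keys it creates is $w_i2^{w_i-1}$. Each key also carries a counter of $O(1)$ words, since counts are at most $m$. This gives $O\bigl(\sum_i w_i2^{w_i}\bigr)$, with merged duplicate keys only reducing the total. The query uses only $O(w_i)$ working space, which is reused across iterations.

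Finally, if $w_i\le k$ for all $i$, monotonicity of $w3^w$ and $w2^w$ in $w$ gives $O(mk3^k)$ time and $O(mk2^k)$ space. The main subtle point is the hash-table model: expected $O(r)$ cost per operation must hold for each operation independently of dictionary size. That is exactly the stated computational model, so no amortization issue arises.
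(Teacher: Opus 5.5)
Your proposal is correct and follows the same route as the paper. Like the paper, you charge each iteration one query (Lemma~\ref{lem:query-cost}) and one insertion (Lemma~\ref{lem:insertion_cost}), sum over $i$, and bound the space by at most $2^{w_i}$ keys of length at most $w_i$ per inserted string. Your extra remark that the literal bound needs an additive $O(m)$ term is a valid refinement the paper glosses over. For a weight-zero string, $w_i3^{w_i}=0$, yet the iteration still performs one lookup, one update, and $O(1)$ bookkeeping. This does not affect the $O(mk3^k)$ conclusion, since $k\ge 1$.
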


\begin{proof}
At iteration $i$, the algorithm performs one query for $P^{(i)}$ and one insertion for $P^{(i)}$. By Lemma~\ref{lem:query-cost}, the query time is $O(w_i3^{w_i})$. By Lemma~\ref{lem:insertion_cost}, the insertion time is $O(w_i2^{w_i})$. Therefore the total expected time is
$
      O\left(\sum_{i=1}^{m} w_i3^{w_i}\right).
$
The dictionary receives at most $2^{w_i}$ keys from $P^{(i)}$, each of length at most $w_i$, so the additional space for stored keys and counts is
$
      O\left(\sum_{i=1}^{m}w_i2^{w_i}\right).
$
If all $w_i\leq k$, we obtain $O(mk3^k)$ time and $O(mk2^k)$ space.
\end{proof}

\begin{corollary}
The certification algorithm with witness returns either a correct all-commuting certificate or a correct anticommuting witness in expected time $O(mk3^k)$.
\end{corollary}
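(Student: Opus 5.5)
The plan is to combine two ingredients: the correctness of the counting loop, which the earlier Theorem on \textsc{CountAllAnticommutingPairs} and the preceding Corollary on certification already give, and the cost of the extra witness scan. These two pieces are essentially independent, so I would handle them in order.

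\textbf{Correctness.} I would argue in two cases.

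If all strings commute pairwise, then by Theorem~\ref{thm:query-value} every query value $c_i$ equals the number of $j<i$ anticommuting with $P^{(i)}$. That number is zero, so no scan is triggered and the loop finishes with ``all commute.''

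Conversely, let $i$ be the first iteration with $c_i>0$. Again by Theorem~\ref{thm:query-value}, at least one index $j<i$ has $P^{(j)}$ anticommuting with $P^{(i)}$. The scan over $P^{(1)},\ldots,P^{(i-1)}$ uses the exact parity test of Lemma~\ref{lem:conflict-sign}, so two things hold. It reports only pairs that genuinely anticommute. It also must terminate with such a pair, since one exists. The algorithm then stops and outputs $(j,i)$ with $j<i$, which is a correct witness.

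If instead the loop finishes without any $c_i>0$, then $A=\sum_i c_i=0$, so the ``all commute'' certificate is correct, matching Theorem~\ref{thm:count-return}.

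\textbf{Running time.} The loop, stopped at the latest at iteration $m$, costs no more than the full counting algorithm. By the Total complexity theorem this is expected $O(mk3^k)$.

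The scan is executed at most once, because the algorithm halts upon finding a witness. A single scan performs at most $m-1$ pair tests. Each pair test computes $|\conf(P^{(j)},P^{(i)})|$ in $O(k)$ expected time: first hash the at most $k$ support entries of $P^{(i)}$, then look up each support position of $P^{(j)}$. The scan therefore costs $O(mk)$, which is $O(mk3^k)$. Adding the two bounds gives the claim.

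\textbf{Main subtlety.} The step needing care is making sure the scan is invoked only once. If it were run at every iteration with $c_i>0$, the total could reach $\Theta(m^2k)$. So I would state explicitly that the certification variant terminates at the first positive $c_i$. This is allowed because only one witness is required, and it is consistent with the phrase ``until an anticommuting pair is found.''
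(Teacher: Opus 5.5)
Your proposal is correct and follows the argument the paper intends; the paper gives no separate proof. Correctness comes from Theorems~\ref{thm:query-value} and~\ref{thm:count-return} together with the exact parity test of Lemma~\ref{lem:conflict-sign}. The time bound is the total-complexity bound plus a single $O(mk)$ witness scan. Your remark that the scan must run only at the first iteration with $c_i>0$ is useful, since repeated scans could cost $\Theta(m^2k)$; it makes precise what the paper leaves implicit in the phrase ``until an anticommuting pair is found.''
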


\begin{corollary}
For every fixed $k$, the locality-zeta algorithm is optimal up to a constant depending on $k$ in the sparse-input model.
\end{corollary}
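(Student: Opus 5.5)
The plan is to combine the upper bound from the Total complexity theorem with a matching lower bound of $\Omega(m)$, or more precisely $\Omega(\sum_i (w_i+1))$, on the time any correct algorithm needs in the sparse-input model. For fixed $k$ the upper bound $O(mk3^k)$ is $c_k\cdot m$ with $c_k=k3^k$. So optimality up to a $k$-dependent constant follows once every correct algorithm for the counting (or certification) problem is shown to need time at least proportional to the input size, which is $\Theta(m)$ when each string has at least one entry.

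The lower bound will be an adversary (input-reading) argument. I will show that any algorithm that does not read every input string can be fooled. Suppose that on some instance the algorithm never inspects any part of the sparse description of string $P^{(i)}$. Start from the instance in which every string equals $\X$ on qubit $1$; these pairwise commute, so the answer is $A=0$. If some $P^{(i)}$ is never read, replace it by $\Z$ on qubit $1$. The algorithm's execution, and hence its output, is unchanged, but now $P^{(i)}$ anticommutes with every other string by Lemma~\ref{lem:conflict-sign}. Therefore $A=m-1\neq 0$ once $m\geq 2$, which is a contradiction.

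Hence every string must be read at least once, giving $\Omega(m)$ time. The same instance family shows the bound holds for certification and witness finding as well, since the correct answer changes from ``all commute'' to ``not all commute''. Comparing $\Omega(m)$ with $O(mk3^k)$ gives a ratio bounded by a function of $k$ alone, which is the claim.

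The main obstacle is making the model precise enough that ``not reading'' an input string is well defined, especially for randomized algorithms with expected running time. I would handle this as follows. Any algorithm whose expected time is $o(m)$ leaves, with positive probability on the all-$\X$ instance, some string unread. On those runs the modified instance yields an identical execution. So the algorithm errs with positive probability, contradicting correctness.

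Two smaller points need care. First, the degenerate case of empty strings (weight $0$) would make the input size nonconstant per string; I would simply restrict to instances with weight at least $1$, or note that the input length is still $\Theta(m)$ words for indexing. Second, if one wants optimality in terms of input length $\sum_i w_i$, one can also flip a single letter at a chosen qubit. The coarser $\Omega(m)$ bound already suffices for fixed $k$, since $\sum_i w_i\le mk$.
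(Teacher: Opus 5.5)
Your proposal is correct and follows the reasoning the paper relies on. The paper gives no separate proof of this corollary, only the remark that the algorithm is input-optimal: the $O(mk3^k)$ bound from the Total complexity theorem is matched by the unavoidable $\Omega(m)$ cost of reading the input, and your all-$\X$ versus one-$\Z$ adversary makes that lower bound rigorous. In the randomized (Las Vegas) case, fix the unread index $i$ by pigeonhole over the $m$ strings, or argue per run since each finite coin sequence has positive probability, so that a single fixed modified instance is fooled.
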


If a downstream solver needs the entire dense edge list, then no algorithm can avoid quadratic output time. If the downstream task only needs to know whether a proposed family is commuting, how many anticommuting violations it contains, or whether there exists a violation, the locality-zeta algorithm avoids building the graph.

\section{Example}

Consider three two-qubit strings
\[
      P^{(1)}=\X\Y,
      \qquad
      P^{(2)}=\Y\Z,
      \qquad
      P^{(3)}=\Y\Id.
\]
For $P^{(1)}$ and $P^{(2)}$, the conflict set is $\{1,2\}$ because $\X$ differs from $\Y$ on qubit $1$ and $\Y$ differs from $\Z$ on qubit $2$. The size is even, so $P^{(1)}$ and $P^{(2)}$ commute. For $P^{(1)}$ and $P^{(3)}$, the conflict set is $\{1\}$, so they anticommute. For $P^{(2)}$ and $P^{(3)}$, the conflict set is empty on qubit $1$ because both have $\Y$, and qubit $2$ is ignored because $P^{(3)}$ has identity. Hence they commute.

Suppose $P^{(1)}$ and $P^{(2)}$ have been inserted and we query $P^{(3)}=\Y\Id$. The support is $S=\{1\}$. The query has two subsets: $A=\varnothing$ and $A=\{1\}$. For $A=\varnothing$, $F\big(P^{(3)},\varnothing\big)=2$ because there are two previous strings. For $A=\{1\}$, assignments conflicting with $\Y$ are $\X$ and $\Z$. Among previous strings, only $P^{(1)}$ has $\X$ on qubit $1$. No previous string has $\Z$ on qubit $1$. Thus $F\big(P^{(3)},\{1\}\big)=1$. The zeta sum is
$
      Z= (+1)\cdot 2 + (-2)\cdot 1 =0.
$
The anticommutation count against previous strings is
$
      (|G|-Z)/2=(2-0)/2=1,
$
correctly detecting that $P^{(3)}$ anticommutes with exactly one previous string.

\section{Conclusion}

The locality-zeta algorithm turns exact anticommutation counting for sparse $k$-local Pauli strings from a pairwise problem into a pattern-counting problem. It aggregates all previously seen local subpatterns and uses inclusion--exclusion to recover the parity of the conflict set against a query. The resulting time is linear in the number of Pauli strings and input-optimal for fixed locality $k$.
Pauli commutation is a ubiquitous inner loop in measurement grouping, Pauli arithmetic, and local Hamiltonian workflows. We anticipate that these results will be instrumental in scaling the classical subroutines necessary for future high-performance quantum simulation and error mitigation architectures.

\bibliographystyle{unsrt}
\bibliography{references}

\end{document}